\documentclass[a4paper, 10pt, twocolumn]{article}

\usepackage{authblk}
\usepackage{graphicx}
\usepackage{amsmath}
\usepackage{amsthm}
\usepackage{amssymb}
\usepackage{hyperref}

\graphicspath{{../}}

\newtheorem{proposition}{Proposition}

\title{Defensive Resource Allocation in Social Networks}

\author[1]{Antonia Maria Masucci
\thanks{Email: \href{mailto:antonia.masucci@inria.fr}{antonia.masucci@inria.fr}}}
\affil[1]{INRIA Paris-Rocquencourt\\ Domaine de Voluceau B.P. 105\\ 78153 Le Chesnay\\ France}
\author[2]{Alonso Silva
\thanks{Email: \href{mailto:alonso.silva@alcatel-lucent.com}{alonso.silva@alcatel-lucent.com}
To whom correspondence should be addressed.}}
\affil[2]{Alcatel-Lucent Bell Labs France\\ Centre de Villarceaux\\ Route de Villejust\\ 91620 Nozay\\ France}
\date{}


\date{}

\begin{document}

\maketitle

\begin{abstract}
In this work,
we are interested on the analysis of
competing marketing campaigns
between an incumbent who dominates
the market and a challenger
who wants to enter the market.
We are interested in (a) the simultaneous
decision of how many resources to allocate
to their potential customers to advertise their products
for both marketing campaigns,
and (b) the optimal allocation
on the situation in which the incumbent
knows the entrance of the challenger and thus
can predict its response.
Applying results from game theory,
we characterize these optimal strategic resource allocations
for the voter model of social networks.
\end{abstract}

\section{Introduction}

In contrast to mass marketing, which promotes a product
indiscriminately to all potential customers,
direct marketing promotes a product only
to customers likely to be profitable.
Focusing on the latter, Domingos and Richardson~\cite{DomingosR2001, RichardsonD2002}
incorporated the influence of peers on the
decision making process of potential customers
deciding between different products or services promoted by
competing marketing campaigns.
This aggregated value of a customer has been called the {\sl network value of a customer}.

If we consider that each customer
makes a buying decision independently of
every other customer, we should only consider
his intrinsic value (i.e., the expected profit from sales to him).
However, if we consider the often strong influence of her friends,
acquaintances, etc., then we should
incorporate this peer influence to his value for the marketing campaigns.

In the present work,
our focus is different from
previous works where
their interest is to which potential
customers to market,
while in our work
is on how many resources to allocate to market
to potential customers.
Moreover, we are interested
on the scenario when two competing
marketing campaigns
need to 
decide how many resources to allocate
to potential customers to advertise
their products
either simultaneously or
where the incumbent can foresee
the arrival of the challenger and
commit to a strategy.
The process and dynamics by which
influence is spread is given
by the voter model.

\subsection{Related Works}

The (meta) problem of influence maximization
was first defined by Domingos and Richardson~\cite{DomingosR2001,RichardsonD2002},
where they studied a probabilistic setting of this problem and
provided heuristics to compute a spread maximizing set.
Based on the results of Nemhauser et al.~\cite{NemhauserWF1978},
Kempe et al.~\cite{Kempe2003,Kempe2005}
and Mossel and Roch~\cite{MosselR2007}
proved that for very natural activation functions,
the function of the expected number of active nodes
at termination is a submodular function
and thus can be approximated through
a greedy approach with a $(1-1/e-\varepsilon)$-approximation
algorithm for the spread maximization set problem.
A slightly different model but with similar flavor,
the voter model, was introduced by Clifford and Sudbury~\cite{CliffordS1973}
and Holley and Liggett~\cite{HolleyL1975}.
In that model of social network, Even-Dar and Shapira~\cite{EvenDar2007}
found an exact solution to the spread
maximization set problem when all the nodes
have the same cost.

Competitive influence in social networks has been studied
in other scenarios.
%
Bharathi et al.~\cite{BharathiKS2007}
proposed a generalization of the independent
cascade model~\cite{GoldenbergLM2001}
and gave a $(1-1/e)$ approximation
algorithm for computing the best response
to an already known opponent's strategy.
Sanjeev and Kearns~\cite{GoyalK2012}
studied the case of two players
simultaneously choosing
some nodes to initially seed
while considering two independent functions
for the consumers
denoted switching function and
selection function.
Borodin et al.~\cite{BorodinFO2010}
showed that for a broad family of competitive
influence models it is NP-hard
to achieve an approximation that
is better than the square root
of the optimal solution.
Chasparis and Shamma~\cite{ChasparisS2010}
found optimal advertising policies using dynamic programming
on some particular models of social networks.

Within the general context of competitive contests,
there is an extensive literature (see e.g.~\cite{GrossW1950,Roberson2006,masucciS2014,SchwartzLS2014}),
however their focus is mainly on the case when the contest success function
is given by the marketing campaign that put the maximum resources.
In that case, Powell~\cite{Powell2008} studied the sequential,
non-zero sum game who has a pure strategy subgame perfect equilibrium where
the defender always plays the same pure strategy in any equilibrium,
and the attacker's equilibrium response is generically unique
and entails no mixing.
Friedman~\cite{Friedman1958} studied the Nash equilibrium
when the valuations for both marketing campaigns are the same.

\section{Model}

Consider two firms: an incumbent (or defender)~$D$ and a challenger (or attacker)~$A$.
Consider the set $\mathcal{N}=\{1,2,\ldots,n\}$ of potential customers.
The challenger decides to launch a viral marketing campaign at time $t_0=0$
(we will also refer to the viral marketing campaign as an attack).
The budget for the challenger is given by $B_A\ge0$.
The incumbent~$D$ decides to allocate a budget $B_D\ge0$ at time $t_0$ to prevent its
customers to switch.
The players of the game are the competing marketing campaigns and the nodes correspond
to the potential customers.

The strategy for player~$i$, where $i\in\{D,A\}$, consists on an allocation vector ${\bf x}_i=(x_{i,1},x_{i,2}\ldots,x_{i,n})$
where $x_{i,j}$ represents the budget allocated by player $i$ to customer~$j$
(e.g., through promotions or offers).
Therefore, the set of strategies
is given by the $(n-1)$-dimensional simplex
\begin{align*}
\Delta_i=\{&(x_{i,1},\ldots,x_{i,n}):\\
&x_{i,j}\ge0,\forall\,1\le j\le n\quad\textrm{and}\quad\sum_{j=1}^nx_{i,j}=B_i\}.
\end{align*}

We consider each potential customer as a component contest.
Let $p_{i,j}(x_{i,j},x_{-i,j})$, henceforth the
contest success function (CSF), denote the
probability that player $i$ wins component
contest $j$ when player $i$ allocates $x_{i,j}$
resources and the adversary player $-i$ allocates $x_{-i,j}$
resources to component contest~$j$. We assume
that the CSF for a player~$i$ is proportional to the 
share of total advertising expenditure on customer $j$, i.e.,
\begin{equation}\label{eq:fra1}
p_{i,j}(x_{i,j},x_{-i,j})=
\left\{
\begin{array}{cl}
\frac{x_{i,j}}{x_{i,j}+x_{-i,j}} & \textrm{if }x_{i,j}\neq 0\textrm{ or } x_{i,j}\neq0,\\
\frac12 & \textrm{if }x_{i,j}=x_{-i,j}=0.
\end{array}
\right.
\end{equation}

Both firms may have different valuations for different customers.
The intrinsic value for player $i$ of customer $j$ is given by $w_{i,j}$
where $i\in\{D,A\}$ and $1\le j\le N$.
The intrinsic payoff function for player $i$ is given by
\begin{equation}
\pi({\bf x}_i,{\bf x}_{-i},{\bf w}_i)=\sum_{j=1}^n w_{i,j}p_{i,j}(x_{i,j},x_{-i,j}),
\end{equation}
where ${\bf w}_i=(w_{i,1},w_{i,2},\ldots,w_{i,n})$.

We are interested as well on the network value of a customer.
In the next subsection we will compute this.

\subsection{Network value of a customer}

Let $G=(V,E)$ be an undirected graph with self-loops
where $V$ is the set of nodes in the graph
which represent the potential customers of
the competing marketing campaigns
and $E$ is the set of edges which
represent the influence between individuals.
We denote by $\lvert S\rvert$ the cardinality of the set~$S$,
by the index $i$ to one of the two players ($D$ or $A$)
and by the index $-i$ to the opponent of player $i$.
We consider that the graph $G$ has $n$ nodes, i.e. $\lvert V\rvert=n$.
For a node $j\in V$, we denote by $N(j)$ the set of neighbors
of $j$ in~$G$, i.e. $N(j)=\{j'\in V: \{j,j'\}\in E\}$
and by $d_j$ the degree of node $j$, i.e. $d_j=\lvert N(j)\rvert$.

We consider two labeling functions for a node $j\in V$
given by its initial preference
between different players, $D$ or $A$,
denoted by functions $f_A^0$ and $f_D^0$ respectively.
We denote by $f^0_i(v)=1$ when node $v\in V$ prefers
the product promoted by marketing campaign $i$.
We consider that every customer has an initial preference
between the firms, i.e.~$f^0_i=1-f^0_{-i}$.

We assume that the initial preference for a customer~$j$ is proportional to the 
share of total advertising expenditure on customer~$j$, i.e.,
\begin{equation}
f_i^0(j)=
\left\{
\begin{array}{rl}
1 & \textrm{with probability } p_{i,j}(x_i,y_i)\\
0 & \textrm{with probability } p_{-i,j}(x_i,y_i)\\
\end{array}
\right.
\end{equation}
where the function $p_{i,j}(\cdot,\cdot)$ is given by eq.~\eqref{eq:fra1}.

The evolution of the system will be described by the voter model.
Starting from any arbitrary initial preference assignment to the vertices
of $G$, at each time $t\ge 1$, each node picks uniformly at random
one of its neighbors and adopts its opinion.
In other words, starting from any assignment \mbox{$f_i^0: V\to\{0,1\}$},
we inductively define
\begin{equation}
f_i^{t+1}(j)=
\left\{
\begin{array}{rl}
1 & \textrm{with prob. }\frac{\lvert\{j'\in N(j): f_i^t(j')=1\}\rvert}{\lvert N(j)\rvert},\\
0 & \textrm{with prob. }\frac{\lvert\{j'\in N(j): f_i^t(j')=0\}\rvert}{\lvert N(j)\rvert}.\\
\end{array}
\right.
\end{equation}

For player $i$ and target time $\tau$, the expected payoff is given~by
\begin{equation}
\mathbf{E}
\left[\sum_{j\in V} w_{i,j}f_i^\tau(j)\right].
\end{equation}

We notice that in the voter model, the probability
that node $j$ adopts the opinion of one its neighbors $j'$
is precisely $1/\lvert N(j)\rvert$. Equivalently, this is the probability
that a random walk of length $1$ that starts at $j$
ends up in $j'$.
Generalizing this observation by induction on $t$,
we obtain the following proposition.

\begin{proposition}[Even-Dar and Shapira~\cite{EvenDar2007}]
Let $p_{j,j'}^t$ denote the probability that a random
walk of length $t$ starting at node $j$
stops at node $j'$.
Then the probability that after $t$ iterations of the voter model,
node $j$ will adopt the opinion that node $j'$ had at time $t=0$
is precisely $p_{j,j'}^t$.
\end{proposition}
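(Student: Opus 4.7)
The plan is to prove the proposition by induction on $t$, matching the recursion satisfied by the voter model dynamics with the one-step recursion of the simple random walk on $G$.

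First I would set up notation. Let $q^t_{j,j'}$ denote the probability that, after $t$ iterations of the voter model, node $j$ holds the initial opinion of node $j'$, i.e.\ $q^t_{j,j'} = \Pr[f_i^t(j) = f_i^0(j')]$ when we think of $f_i^0(j')$ as a ``marker'' carried forward by the dynamics. The goal is to show $q^t_{j,j'} = p^t_{j,j'}$ for all $j,j'$ and all $t \ge 0$. The base case $t=0$ is immediate, since both quantities equal $\mathbf{1}[j = j']$: no dynamics have occurred and a random walk of length $0$ starting at $j$ is at $j$ with probability one.

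For the inductive step, I would exploit the one-step structure of the voter rule stated in the excerpt. At time $t+1$, node $j$ selects a neighbor $j'' \in N(j)$ uniformly at random and copies $j''$'s opinion at time~$t$. Conditioning on which neighbor $j''$ is picked gives
\begin{equation*}
q^{t+1}_{j,j'} = \sum_{j'' \in N(j)} \frac{1}{|N(j)|}\, q^{t}_{j'',j'}.
\end{equation*}
By the induction hypothesis, $q^{t}_{j'',j'} = p^{t}_{j'',j'}$, so the right-hand side equals $\sum_{j'' \in N(j)} \frac{1}{|N(j)|} p^{t}_{j'',j'}$. I would then observe that this is precisely the standard one-step decomposition of the simple random walk: conditioning on the first step from $j$ to a uniform neighbor $j''$, the probability that a walk of length $t+1$ ends at $j'$ is $\sum_{j'' \in N(j)} \frac{1}{|N(j)|} p^{t}_{j'',j'} = p^{t+1}_{j,j'}$, completing the induction.

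The proof is essentially routine; there is no real obstacle once one sees that the voter update and the random-walk update share the same transition kernel $P_{j,j''} = 1/|N(j)|$. The only point that requires minor care is bookkeeping: the voter model is most naturally thought of as propagating opinions \emph{forward} in time from the initial configuration, while the random walk in the statement propagates \emph{from} $j$ to $j'$. These match because the kernel is applied at node $j$ in both cases (the walker's first step corresponds to $j$'s copy of a random neighbor in the first voter step); the presence of self-loops does not affect the argument since the kernel treats them uniformly along with the other neighbors in $N(j)$.
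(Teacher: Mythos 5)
Your proof is correct and follows the same route the paper indicates: it takes the one-step observation that node $j$ copies a uniform neighbor with probability $1/|N(j)|$ (the random-walk kernel) and carries out the induction on $t$ that the paper only sketches before citing Even-Dar and Shapira. The marker/lineage interpretation you flag is exactly the right way to make the statement precise, since literal equality of the binary labels is not what is being tracked.
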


Let $M$ be the normalized transition matrix of $G$, i.e.,
$M(j,j')=1/\lvert N(j)\rvert$ if $j'\in N(j)$.
By linearity of expectation, we have that for player $i$
\begin{equation}
\mathbf{E}\left[
\sum_{j\in V} w_{i,j}f_i^\tau(j)\right]=\sum_{j\in V}w_{i,j}\mathbf{P}[f_i^\tau(j)=1].
\end{equation}

The probability that a random walk
of length~$t$ starting at~$j$ ends in $j'$,
is given by the $(j,j')$-entry of the matrix~$M^t$.
Then
\begin{align*}
\mathbf{P}[f_i^t(j)=1]&=\sum_{j'\in V} p^t_{j,j'}\mathbf{P}[f_i^0(j')=1]\\
&=\sum_{j'\in V} M^t(j,j')\mathbf{P}[f_i^0(j')=1],
\end{align*}
and therefore,
\begin{equation}\label{eq:bamako}
\mathbf{E}\left[
\sum_{j\in V} w_{i,j} f_i^t(j)
\right]=
\sum_{j\in V}\sum_{j'\in V} w_{i,j}M^t(j,j')\mathbf{P}[f_i^0(j')=1].
\end{equation}
We know that
$\mathbf{P}[f_i^0(j')=1]=p_{i,j'}(x_{i,j'},x_{-i,j'})$.
Therefore, eq.~\eqref{eq:bamako} becomes
\begin{equation}
\sum_{j\in V}\sum_{j'\in V} w_{i,j}M^t(j,j')p_{i,j'}(x_{i,j'},x_{-i,j'}).
\end{equation}

Therefore, the expected payoff for player $i$ is given by
\begin{equation}\label{eq:gabon}
F_i({\bf x}_i,{\bf x}_{-i},{\bf v}_i)=\sum_{j'=1}^n v_{i,j'}\frac{x_{i,j'}}{x_{i,j'}+x_{-i,j'}},
\end{equation}
where ${\bf v}_i=(v_{i,1},v_{i,2},\ldots,v_{i,n})$.
\begin{equation}
v_{i,j'}=\sum_{j\in V} w_{i,j}M^t(j,j'),
\end{equation}
corresponds to the network value of customer~$j'$ at time $t$.
The previous expression is
subject to the constraint~${\bf x}_i\in\Delta_i$.

\section{Results}

From the previous section,
we are able to compute the
network value of each customer,
therefore we can restrict ourselves
to work with these values.
With the next proposition,
we are able to determine the best response function
for player $i$ considering the network value of each customer
at a target time $\tau$ given that the strategy of the
opponent $-i$ is~${\bf x}_{-i}$.

\begin{proposition}[Friedman~\cite{Friedman1958}]\label{prop:bamako}
The best response function for player $i$, given that
player $-i$ strategy is ${\bf x}_{-i}$, is:
\begin{equation}
x^*_{i,j}=-x_{-i,j}+(B_i+B_{-i})\frac{\sqrt{v_{i,j}x_{-i,j}}}{\sum_{k=1}^n\sqrt{v_{i,k}x_{-i,k}}}.
\end{equation}
\end{proposition}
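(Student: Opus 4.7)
The plan is to treat this as a standard constrained optimization problem: player $i$ chooses ${\bf x}_i\in\Delta_i$ to maximize the objective $F_i({\bf x}_i,{\bf x}_{-i},{\bf v}_i)$ of equation~\eqref{eq:gabon} with ${\bf x}_{-i}$ held fixed. Since each summand $v_{i,j}\,x_{i,j}/(x_{i,j}+x_{-i,j})$ is concave and increasing in $x_{i,j}$ (its second derivative in $x_{i,j}$ is $-2v_{i,j}x_{-i,j}/(x_{i,j}+x_{-i,j})^3\le 0$), the objective is a sum of concave functions and hence concave on $\Delta_i$. This guarantees that any interior critical point of the Lagrangian is the unique global maximizer, so the problem reduces to solving the first-order conditions.

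First, I would write the Lagrangian
\begin{equation*}
\mathcal{L}=\sum_{j=1}^n v_{i,j}\frac{x_{i,j}}{x_{i,j}+x_{-i,j}}-\lambda\Bigl(\sum_{j=1}^n x_{i,j}-B_i\Bigr),
\end{equation*}
differentiate with respect to $x_{i,j}$, and obtain the stationarity condition
\begin{equation*}
v_{i,j}\frac{x_{-i,j}}{(x_{i,j}+x_{-i,j})^2}=\lambda,\qquad j=1,\ldots,n.
\end{equation*}
Solving for $x_{i,j}+x_{-i,j}$ yields $x_{i,j}+x_{-i,j}=\sqrt{v_{i,j}x_{-i,j}/\lambda}$, so that $x_{i,j}=-x_{-i,j}+\sqrt{v_{i,j}x_{-i,j}}/\sqrt{\lambda}$.

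Next, I would determine the multiplier $\lambda$ from the budget constraint: summing the expression for $x_{i,j}$ over $j$ and using $\sum_j x_{i,j}=B_i$ together with $\sum_j x_{-i,j}=B_{-i}$ gives
\begin{equation*}
B_i+B_{-i}=\frac{1}{\sqrt{\lambda}}\sum_{k=1}^n\sqrt{v_{i,k}x_{-i,k}},
\end{equation*}
and substituting back into the expression for $x_{i,j}$ produces exactly the announced formula. The concavity argument from the first paragraph ensures this stationary point is the unique optimum.

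The main subtlety, rather than the calculus, is the boundary behavior: the derivation implicitly assumes $x_{-i,k}>0$ for every $k$ (otherwise the denominator in equation~\eqref{eq:fra1} vanishes when $x_{i,k}=0$ and the derivative blows up), and it also assumes the resulting $x^*_{i,j}$ are all nonnegative. I would therefore close the argument by noting that the formula is stated under the natural regime where the opponent puts positive mass on every customer, and by observing via the KKT conditions that components that would come out negative should simply be set to zero with the sum renormalized over the active set; a brief remark on this edge case is what I expect to be the main obstacle to a fully rigorous treatment, as opposed to the routine Lagrangian computation above.
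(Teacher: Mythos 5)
Your derivation is correct and is the standard first-order/Lagrangian argument underlying Friedman's result; the paper itself gives no proof of Proposition~\ref{prop:bamako} (it simply cites Friedman), so there is nothing in the text to diverge from. The stationarity condition $v_{i,j}x_{-i,j}/(x_{i,j}+x_{-i,j})^2=\lambda$, the elimination of $\lambda$ via the two budget constraints, and the concavity check are all right, and your closing remark about the implicit assumptions ($x_{-i,k}>0$ for all $k$ and nonnegativity of the resulting $x^*_{i,j}$, to be handled by KKT on the active set) is a caveat the paper silently ignores rather than a gap in your argument.
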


From the previous proposition, we obtain that the best response functions for players
$A$ and $D$ are given by
\begin{eqnarray}
x^*_{A,j}&=-x_{D,j}+(B_D+B_A)\frac{\sqrt{v_{A,j}x_{D,j}}}{\sum_{k=1}^n\sqrt{v_{A,k}x_{D,k}}},\label{eq:boh1}\\
x^*_{D,j}&=-x_{A,j}+(B_A+B_D)\frac{\sqrt{v_{D,j}x_{A,j}}}{\sum_{k=1}^n\sqrt{v_{D,k}x_{A,k}}}.\label{eq:boh2}
\end{eqnarray}

In the next proposition, we assume that the valuations of one of the players
are proportional (bigger ($\alpha>1$), smaller ($\alpha<1$) or equal ($\alpha=1$))
to the valuations of the adversary player.
In that case, we have the following proposition.
\begin{proposition}\label{prop:sierra}
If $v_{i,j}=\alpha v_{-i,j}$ $\forall j\in\{1,\ldots,n\}$, $i\in\{A,D\}$,
with $\alpha>0$, then the Nash equilibrium of the game is given by
\begin{equation}
x_{i,j}=B_i\frac{v_{i,j}}{V_i},
\end{equation}
where $V_i=\sum_{j=1}^n v_{i,j}$.
\end{proposition}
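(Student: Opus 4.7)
The plan is to verify the candidate profile directly using the best-response characterization in Proposition~\ref{prop:bamako}: a strategy pair is a Nash equilibrium exactly when each vector is a best response to the other, so it suffices to substitute $x_{-i,j}=B_{-i}v_{-i,j}/V_{-i}$ into the best-response formula for player~$i$ and check that the output equals $x_{i,j}=B_iv_{i,j}/V_i$. Feasibility is immediate, since the proposed allocations are non-negative and sum to $B_i\sum_j v_{i,j}/V_i=B_i\in\Delta_i$.

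The key observation is that the proportionality hypothesis $v_{i,j}=\alpha v_{-i,j}$ forces $V_i=\alpha V_{-i}$, and hence the normalized value weights coincide across players:
\begin{equation*}
\frac{v_{i,j}}{V_i}=\frac{v_{-i,j}}{V_{-i}}\quad\text{for every }j.
\end{equation*}
Therefore the opponent's candidate strategy can equivalently be written as $x_{-i,j}=B_{-i}v_{i,j}/V_i$, which is the form that feeds into player~$i$'s best-response formula. Substituting yields $\sqrt{v_{i,j}x_{-i,j}}=v_{i,j}\sqrt{B_{-i}/V_i}$, so the denominator in Proposition~\ref{prop:bamako} equals $\sqrt{B_{-i}/V_i}\sum_k v_{i,k}=\sqrt{B_{-i}V_i}$, and the ratio collapses cleanly to $v_{i,j}/V_i$. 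Plugging back in,
\begin{equation*}
x^*_{i,j}=-B_{-i}\frac{v_{i,j}}{V_i}+(B_i+B_{-i})\frac{v_{i,j}}{V_i}=B_i\frac{v_{i,j}}{V_i},
\end{equation*}
matching the claimed equilibrium. Interchanging $i$ and $-i$ handles the other player's best-response condition symmetrically, so the profile is indeed a Nash equilibrium.

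Since this reduces to a direct substitution, I do not expect a substantive obstacle; the only subtlety is the degenerate case where some $v_{i,j}$ vanish, in which $x_{-i,j}=0$ and the CSF of eq.~\eqref{eq:fra1} falls under the second branch. This is easily dispatched by restricting attention to the common support $\{j:v_{i,j}>0\}=\{j:v_{-i,j}>0\}$ (the two coincide because $\alpha>0$): any player allocating positive resources to a customer with zero value could strictly improve by moving that mass onto the support, so at equilibrium both strategies vanish off this set and the computation above applies verbatim on it.
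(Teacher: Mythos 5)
Your proof is correct, but it takes a different route from the paper. The paper's argument is a reduction: it observes that under the hypothesis $v_{i,j}=\alpha v_{-i,j}$ the payoff $F_i({\bf x}_i,{\bf x}_{-i},{\bf v}_i)$ equals $\alpha\,F_i({\bf x}_i,{\bf x}_{-i},{\bf v}_{-i})$, so each player's objective is a positive scalar multiple of what it would be with the opponent's valuations; since rescaling an objective by a positive constant does not change best responses, the game has the same equilibria as the symmetric-valuation ($\alpha=1$) case, for which it simply cites Friedman's equilibrium $x_{i,j}=B_i v_{i,j}/V_i$. You instead verify the candidate profile directly as a mutual fixed point of the explicit best-response map of Proposition~\ref{prop:bamako}: substituting $x_{-i,j}=B_{-i}v_{-i,j}/V_{-i}$, using $v_{i,j}/V_i=v_{-i,j}/V_{-i}$, and collapsing the ratio to $v_{i,j}/V_i$. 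Your computation checks out (the denominator does reduce to $\sqrt{B_{-i}V_i}$ and the telescoping gives $B_i v_{i,j}/V_i$). What each approach buys: the paper's proof is shorter and conceptually cleaner (it also yields the remark that scaling valuations never perturbs equilibrium strategies), but it is entirely parasitic on the cited $\alpha=1$ result and, as written, contains a typo in the final displayed line ($\alpha F_i({\bf x}_i,{\bf x}_{-i},{\bf v}_i)$ should be $\alpha F_i({\bf x}_i,{\bf x}_{-i},{\bf v}_{-i})$). Your proof is self-contained modulo Proposition~\ref{prop:bamako}, exhibits the equilibrium explicitly, and is the only one of the two that addresses the degenerate customers with $v_{i,j}=0$, where the best-response formula and the CSF's $(0,0)$ branch need the small argument you give about restricting to the common support.
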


\begin{proof}
The proof for the case $\alpha=1$ is given by Friedman~\cite{Friedman1958}.
For that case, from eq.~\eqref{eq:gabon},
the objective function for player $i$ is given by
$F_i({\bf x}_i,{\bf x}_{-i},{\bf v}_{-i})$. 
For $\alpha>0$, we have that the objective function,
\begin{eqnarray}
F_i({\bf x}_i,{\bf x}_{-i},{\bf v}_i)&=\sum_{j=1}^n v_{i,j}\frac{x_{i,j}}{x_{i,j}+x_{-i,j}},\\
&=\alpha \sum_{j=1}^n v_{-i,j}\frac{x_{i,j}}{x_{i,j}+x_{-i,j}}.\\
&=\alpha F_i({\bf x}_i,{\bf x}_{-i},{\bf v}_i),
\end{eqnarray}
and thus we conclude from the previous case.
\end{proof}

The previous result not only gives explicitly the Nash
equilibrium under some constraints, 
but it proves that a scaling factor for every contest
does not change the Nash equilibrium strategies of the players.
 
An interesting property, that we will exploit in
the following is that if for one of the players
all the contests have the same valuation,
then for every two equal valuation contests for
the adversary player, the equilibrium allocation for each player
in the two contests are equal.

\begin{proposition}\label{prop:uganda}
Assume that $v_{A,\ell}=v\quad\forall 1\le \ell\le n$.
If there exist $k, k'\in\{1,\ldots,n\}$ such that $v_{D,k}=v_{D,k'}$,
then \mbox{$x_{A,k}=x_{A,k'}$} and $x_{D,k}=x_{D,k'}$.
\end{proposition}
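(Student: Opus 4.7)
The plan is to exploit the best response equations from Proposition \ref{prop:bamako} at the Nash equilibrium and show that, under the stated hypothesis, contest $k$ and contest $k'$ satisfy identical scalar equations.

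First, I would use the constant attacker valuation $v_{A,\ell} = v$ to simplify the attacker's best response, so that $\sqrt{v_{A,j} x_{D,j}} = \sqrt{v}\sqrt{x_{D,j}}$ and the factor $\sqrt{v}$ cancels in the numerator and the sum. Rearranging gives, at equilibrium,
\begin{equation*}
x_{A,j} + x_{D,j} = (B_A+B_D)\,\frac{\sqrt{x_{D,j}}}{\sum_{k}\sqrt{x_{D,k}}}.
\end{equation*}
The defender's best response gives the same quantity $x_{A,j}+x_{D,j}$ equal to $(B_A+B_D)\sqrt{v_{D,j}x_{A,j}}/\sum_k\sqrt{v_{D,k}x_{A,k}}$. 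Equating the two expressions and separating the $j$-dependence from the sums shows that there exists a constant $C$, independent of $j$, such that $\sqrt{x_{D,j}} = C\sqrt{v_{D,j}x_{A,j}}$, i.e.\ $x_{D,j}/x_{A,j} = C^{2}\,v_{D,j}$ for every $j$.

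Next I would use the hypothesis $v_{D,k}=v_{D,k'}$ to deduce that the two ratios coincide, say $r := x_{D,k}/x_{A,k} = x_{D,k'}/x_{A,k'}$. Substituting $x_{D,j} = r_j x_{A,j}$ back into the simplified attacker equilibrium equation yields
\begin{equation*}
x_{A,j}(1+r_j) = (B_A+B_D)\,\frac{\sqrt{r_j\, x_{A,j}}}{\sum_l\sqrt{r_l\, x_{A,l}}},
\end{equation*}
and dividing the version at $j=k$ by the version at $j=k'$ (using $r_k=r_{k'}$) gives $x_{A,k}/x_{A,k'} = \sqrt{x_{A,k}/x_{A,k'}}$, which forces $x_{A,k}=x_{A,k'}$. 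Since the ratios agree, this also gives $x_{D,k}=x_{D,k'}$.

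The only delicate point I anticipate is justifying division by $\sum_k\sqrt{x_{D,k}}$ and by $x_{A,k'}$, i.e.\ ruling out degenerate equilibria with zero allocations on the contests of interest. I would handle this by noting that if $x_{A,k'}=0$ (hence, by the ratio, $x_{D,k'}=0$), then each player would have incentive to deviate on that contest (the best-response formulas put strictly positive mass wherever the opponent does, and conversely), so such corners are excluded by the Nash condition; the remaining algebra is then purely mechanical.
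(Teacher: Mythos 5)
Your proposal is correct and follows essentially the same route as the paper: you derive the same two key equilibrium relations (the constant-$v_A$ simplification $x_{A,j}+x_{D,j}=\gamma\sqrt{x_{D,j}}$ and the proportionality $x_{D,j}=\kappa^2 v_{D,j}x_{A,j}$ obtained by combining the two best responses), differing only in that you conclude by dividing the $k$ and $k'$ instances rather than solving explicitly for $x_{A,j}$ as a function of $v_{D,j}$. Your remark about ruling out zero allocations is a point the paper's proof silently assumes.
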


\begin{proof}
For $j\in\{1,\ldots,n\}$,
since $v_{A,\ell}=v$ \mbox{$\forall 1\le \ell\le n$}, from eq.~\eqref{eq:boh1} we have
\begin{equation}\label{eq:boh3}
x_{A,j}+x_{D,j}=\gamma\sqrt{x_{D,j}},
\end{equation}
where
\begin{equation}
\gamma=\frac{(B_A+B_D)}{\sum_{k=1}^n\sqrt{x_{D,k}}}
\end{equation}
does not depend on $j$.

From the difference between eq.~\eqref{eq:boh1} and eq.~\eqref{eq:boh2},
\begin{equation}\label{eq:boh4}
x_{D,j}=\kappa^2 v_{D,j}x_{A,j},
\end{equation}
where
\begin{equation}
\kappa=\frac{\sum_{k=1}^n\sqrt{x_{D,k}}}{\sum_{k=1}^n\sqrt{v_{D,k}x_{A,k}}}
\end{equation}
does not depend on $j$.

Replacing eq.~\eqref{eq:boh4} on eq.~\eqref{eq:boh3},
\begin{equation}
(1+\kappa^2v_{D,j})x_{A,j}=\gamma\kappa\sqrt{v_{D,j}x_{A,j}},
\end{equation}
or equivalently,
\begin{equation}\label{eq:boh5}
x_{A,j}=\frac{\gamma^2\kappa^2v_{D,j}}{(1+\kappa^2v_{D,j})^2}.
\end{equation}
If there exist $k, k'\in\{1,\ldots,n\}$ such that $v_{D,k}=v_{D,k'}\equiv w$,
then from eq.~\eqref{eq:boh5},
\begin{equation}
x_{A,k}=\frac{\gamma^2\kappa^2w}{(1+\kappa^2w)^2}=x_{A,k'}.
\end{equation}
From eq.~\eqref{eq:boh4}, we also obtain
\begin{equation}
x_{D,k}=\kappa^2 \frac{\gamma^2\kappa^2w^2}{(1+\kappa^2w)^2}=x_{D,k'}.
\end{equation}
\end{proof}

In Proposition~\ref{prop:sierra}, we proved that
a scaling factor between the valuations of the players
does not change their equilibrium strategies.
However, we will see in the following proposition
that this situation is unusual.
Actually, even in the case of two communities within
the social network, where the valuations for the attacker are the same
and the valuations for the defender are different for each community,
the players have very different strategies than the previously considered.

\begin{proposition}\label{prop:zaire1}
Assume $n$ is even, so there exists \mbox{$m\in\mathbf{N}\setminus\{0\}$} such that $n=2m$.
Assume that $v_{A,\ell}=v\quad\forall 1\le\ell\le n$,
$v_{D,k}=\alpha v\quad\forall 1\le k\le m$ and $v_{D,k'}=\beta v\quad\forall m+1\le k'\le n$,
for $\alpha,\beta>0$.
Then the Nash equilibrium is given by
\begin{align*}
x_{A,1}&=\ldots=x_{A,m}=x_A^*,\\
x_{A,m+1}&=\ldots=x_{A,n}=(B_A/m-x_A^*),\\
x_{D,1}&=\ldots=x_{D,m}=\frac{B_D}{m}\frac{\alpha x_A^*}{\alpha x_A^*+\beta (B_D/m-x_A^*)},\\
x_{D,m+1}&=\ldots=x_{D,n}=\frac{B_D}{m}\frac{\beta (B_D/m-x_A^*)}{\alpha x_A^*+\beta (B_D/m-x_A^*)}.
\end{align*}
where $x_A^*$ is unique (its expression is not important and thus it is given in the Appendix).
%
%
\end{proposition}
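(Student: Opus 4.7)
The plan is to reduce the $2n$ unknowns to essentially one scalar by exploiting the symmetry guaranteed by Proposition~\ref{prop:uganda}. Since $v_{A,\ell}=v$ for all $\ell$, and the defender's valuations take only two values, Proposition~\ref{prop:uganda} applies separately within each of the two blocks $\{1,\ldots,m\}$ and $\{m+1,\ldots,n\}$, giving $x_{A,1}=\cdots=x_{A,m}\equiv x_A^*$, $x_{A,m+1}=\cdots=x_{A,n}\equiv y_A$, $x_{D,1}=\cdots=x_{D,m}$, and $x_{D,m+1}=\cdots=x_{D,n}$. The attacker's budget constraint $\sum_j x_{A,j}=B_A$ then forces $y_A=B_A/m-x_A^*$, which is the claimed form for the attacker's allocation.

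Next I would pin down the defender's allocation. From eq.~\eqref{eq:boh4} (established inside the proof of Proposition~\ref{prop:uganda}) one has $x_{D,j}=\kappa^2 v_{D,j}x_{A,j}$ with $\kappa$ independent of $j$. Summing over $j$ and invoking $\sum_j x_{D,j}=B_D$ yields
\begin{equation*}
\kappa^2 v\,\bigl(m\alpha x_A^*+m\beta y_A\bigr)=B_D,
\end{equation*}
so that $\kappa^2 v=(B_D/m)/(\alpha x_A^*+\beta y_A)$. Substituting this back into $x_{D,1}=\kappa^2 v\,\alpha\,x_A^*$ and $x_{D,m+1}=\kappa^2 v\,\beta\, y_A$ produces the two fractions in the statement for $x_{D,1}=\cdots=x_{D,m}$ and $x_{D,m+1}=\cdots=x_{D,n}$.

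To close the system and obtain $x_A^*$ itself, I would use eq.~\eqref{eq:boh5} applied to $j=1$ and $j=m+1$, which expresses $x_A^*$ and $y_A$ as $\gamma^2\kappa^2 v_{D,j}/(1+\kappa^2 v_{D,j})^2$. Combined with the two constraints $x_A^*+y_A=B_A/m$ and $m(\alpha x_A^*+\beta y_A)\kappa^2 v=B_D$, and with $\gamma$ written in terms of $\sqrt{x_{D,1}}$ and $\sqrt{x_{D,m+1}}$, this collapses to a single scalar equation in $x_A^*$. The explicit polynomial form is exactly what is deferred to the Appendix.

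The main obstacle will be the uniqueness claim for $x_A^*$ in the feasible interval $(0,B_A/m)$. I would handle this by monotonicity: once the defender's response is substituted in, the attacker's reduced objective $F_A$ viewed as a function of $x_A^*$ along the line $y_A=B_A/m-x_A^*$ is a sum of two strictly concave terms of the form $v\,x/(x+\phi(x))$ with $\phi$ monotone, so its derivative is strictly decreasing and can vanish at most once. Combined with the fact that the boundary points $x_A^*\in\{0,B_A/m\}$ are never best responses (they would force one of the defender's allocations to zero, making the corresponding $F_A$-derivative unbounded), this yields a unique interior critical point, which is the Nash equilibrium value.
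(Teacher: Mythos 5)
Your derivation of the form of the equilibrium follows essentially the same route as the paper: block symmetry from Proposition~\ref{prop:uganda}, the relation $x_{D,j}=\kappa^2 v_{D,j}x_{A,j}$ from eq.~\eqref{eq:boh4} combined with the defender's budget constraint to get the two fractional shares, and then a reduction to one scalar equation in $x_A^*$. The only cosmetic difference is that the paper obtains its scalar equation by substituting the defender's shares back into the best response eq.~\eqref{eq:boh1} (yielding eq.~\eqref{eq:leo4}), whereas you propose to combine eq.~\eqref{eq:boh5} with the two budget constraints; these are equivalent manipulations of the same first-order conditions. Up to this point your argument is sound and matches the paper.

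The uniqueness argument is where you go wrong. You propose to substitute the defender's (equilibrium) response $x_{D}(x_A^*)$ into $F_A$ and argue that the resulting function of $x_A^*$ is concave with a unique interior critical point. But the critical points of the \emph{composed} map $x_A^*\mapsto F_A\bigl(x_A^*,\,x_D(x_A^*)\bigr)$ are not the Nash conditions: at a Nash equilibrium the attacker's first-order condition is $\partial F_A/\partial x_{A,j}=\lambda$ with $x_D$ held \emph{fixed}, whereas differentiating the composed map picks up an extra term $\bigl(\partial F_A/\partial x_{D,j}\bigr)\,x_{D,j}'(x_A^*)$, which is exactly the distinction between the Nash and Stackelberg solution concepts that this paper goes on to exploit. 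So even if the composed function were concave, its maximizer would not in general be the Nash value of $x_A^*$. A correct uniqueness argument must show that the fixed-point equation~\eqref{eq:leo4} (with $y=B_A/m-x$) has a unique solution in the open interval $(0,B_A/m)$; note that this requires some care, since both endpoints $x=0$ and $x=B_A/m$ formally satisfy that equation, so one must both exclude the corners (where the square-root relations degenerate) and establish monotonicity of the difference of the two sides on the interior. To be fair, the paper itself does not prove uniqueness either --- it simply solves the equation symbolically and asserts the claim --- so your attempt goes beyond the paper, but the specific argument you sketch does not work as stated.
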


\begin{proof}
From Proposition~\ref{prop:uganda}, we have that
\begin{align*}
&x_{A,1}=x_{A,2}=\ldots=x_{A,m}\equiv x,\\
&x_{A,m+1}=x_{A,m+2}=\ldots=x_{A,n}\equiv y.
\end{align*}

From eq.~\eqref{eq:boh4},
\begin{equation}
\frac{x_{D,j}}{v_{D,j}x_{A,j}}=\kappa^2,
\end{equation}
where $\kappa$ does not depend on $j$.
Therefore for all $k,k'\in\{1,\ldots,n\}$,
\begin{equation}
\frac{x_{D,k}}{v_{D,k}x_{A,k}}=
\frac{x_{D,k'}}{v_{D,k'}x_{A,k'}}.
\end{equation}
Let us consider $k\in\{1,\ldots,m\}$ and $k'\in\{m+1,\ldots,n\}$,
then from the previous equation
\begin{equation}\label{eq:leo1}
x_{D,k}=\frac{\alpha x}{\beta y} x_{D,k'}.
\end{equation}
We know that
$\sum_{j=1}^n x_{D,j}=B_D$,
thus
\begin{equation}
m\left(\frac{\alpha x}{\beta y}+1\right)x_{D,k'}=B_D.
\end{equation}
Then
\begin{equation}\label{eq:leo2}
x_{D,m+1}=x_{D,m+2}=\ldots=x_{D,n}=\frac{B_D}{m}\frac{\beta y}{\alpha x+\beta y},
\end{equation}
and from eq.~\eqref{eq:leo1},
\begin{equation}\label{eq:leo3}
x_{D,1}=x_{D,2}=\ldots=x_{D,m}=\frac{B_D}{m}\frac{\alpha x}{\alpha x+\beta y}.
\end{equation}

Replacing eq.~\eqref{eq:leo2} and eq.~\eqref{eq:leo3}
in eq.~\eqref{eq:boh1},
\begin{equation}\label{eq:leo4}
x=-\frac{B_D}{m}\frac{\alpha x}{\alpha x+\beta y}+\frac{B_A+B_D}{m}\frac{\sqrt{\alpha x}}{\sqrt{\alpha x}+\sqrt{\beta y}}.
\end{equation}
We know that $\sum_{j=1}^n x_{A,j}=B_A$, thus
\begin{equation}\label{eq:gerard}
m(x+y)=B_A.
\end{equation}

From eq.~\eqref{eq:leo4} and eq.~\eqref{eq:gerard}, we obtain
\begin{eqnarray}
x={}&-\frac{B_D}{m}\frac{\alpha x}{(\alpha-\beta)x+\beta B_A/m}\\
&+\frac{B_A+B_D}{m}\frac{\sqrt{\alpha x}}{\sqrt{\alpha x}+\sqrt{\beta (B_A/m-x)}},
\end{eqnarray}
which corresponds to $x_A^*$ given in the Appendix.
\end{proof}

\section{Stackelberg leadership model}

For the case when there is an incumbent holding the market
and there is a challenger entering the market,
we consider the Stackelberg leadership model.
The Stackelberg leadership model
is a strategic game in which the leader
firm moves first and then the follower
firm moves afterwards.
To solve the Stackelberg model we need to find
the subgame perfect Nash equilibrium (SPNE) for each player
sequentially.
In our case, the defender is the leader who dominates the market and
the attacker is the follower who wants to enter the market.

From Proposition~\ref{prop:bamako}, the subgame perfect Nash equilibrium for the attacker
is given by eq.~\eqref{eq:boh1}.
Given this information, the leader
solves its own SPNE.

The Lagrangian for the incumbent is given by
\begin{equation}
\mathcal{L}({\bf x}_D)=\sum_{k=1}^n v_{D,k}\frac{x_{D,k}}{x_{A,k}+x_{D,k}}
-\mu\Big(\sum_{\ell=1}^n x_{D,\ell}-B_D\Big),
\end{equation}
where $\mu$ is the Lagrange multiplier.
Since the defender already knows the optimal allocation of resources for the challenger,
it incorporates this information into its Lagrangian,
\begin{align*}
\mathcal{L}({\bf x}_D)={}&
\sum_{k=1}^n \frac{v_{D,k}}{\sqrt{v_{A,k}x_{D,k}}}\frac{x_{D,k}}{B_A+B_D}\sum_{\ell=1}^n\sqrt{v_{A,\ell}x_{D,\ell}}\\
&-\mu\Big(\sum_{\ell=1}^n x_{D,\ell}-B_D\Big),
\end{align*}
or equivalently,
\begin{align*}
\mathcal{L}({\bf x}_D)={}&
\frac{1}{B_A+B_D}\sum_{\ell=1}^n\sqrt{v_{A,\ell}x_{D,\ell}}
\sum_{k=1}^n \frac{v_{D,k}\sqrt{x_{D,k}}}{\sqrt{v_{A,k}}}\\
&-\mu\Big(\sum_{\ell=1}^n x_{D,\ell}-B_D\Big),
\end{align*}

The necessary conditions for optimality are given by
the equations
\begin{align*}
\frac{\partial\mathcal{L}({\bf x}_D)}{\partial x_{D,k}}=
\frac{1}{B_A+B_D}
\Big(
\frac{\sqrt{v_{A,k}}}{2\sqrt{x_{D,k}}}
\Big)
\Big(
\sum_{\ell=1}^n \frac{v_{D,\ell}\sqrt{x_{D,\ell}}}{\sqrt{v_{A,\ell}}}
\Big)+\\
\frac{1}{B_A+B_D}
\Big(
\frac{v_{D,k}}{\sqrt{v_{A,k}}}
\frac{1}{2\sqrt{x_{D,k}}}\sum_{\ell=1}^n \sqrt{v_{A,\ell}x_{D,\ell}}
\Big)-\mu=0.
\end{align*}
Reordering terms,
\begin{align}\label{eq:toty}
&2\mu(B_A+B_D)=\\
&\frac{\sqrt{v_{A,k}}}{\sqrt{x_{D,k}}}\Big(
\sum_{\ell=1}^n \frac{v_{D,\ell}\sqrt{x_{D,\ell}}}{\sqrt{v_{A,\ell}}}
\Big)
+\frac{v_{D,k}}{\sqrt{v_{A,k}}}\frac{1}{\sqrt{x_{D,k}}}
\sum_{\ell=1}^n \sqrt{v_{A,\ell}x_{D,\ell}}.\nonumber
\end{align}
We will use this equation to compute the following cases.

\begin{proposition}
If $v_{i,j}=\alpha v_{-i,j}$ $\forall j\in\{1,\ldots,n\}$, $i\in\{A,D\}$,
with $\alpha>0$, then the Stackelberg equilibrium of the game is
\begin{equation}
x_{i,j}=B_i\frac{v_{i,j}}{V_i},
\end{equation}
where $V_i=\sum_{k=j}^n v_{i,j}$.
\end{proposition}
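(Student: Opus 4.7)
The plan is to substitute the proportionality assumption $v_{D,j}=\alpha v_{A,j}$ (taking $i=D$ without loss of generality, since the incumbent is the leader) into the first-order optimality condition \eqref{eq:toty} and show that it forces $x_{D,k}$ to be proportional to $v_{D,k}$, and then to verify that the attacker's best response \eqref{eq:boh1} produces the analogous formula for $x_{A,j}$.

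First I would rewrite the terms of \eqref{eq:toty} using $v_{D,\ell}=\alpha v_{A,\ell}$: the quotient $v_{D,\ell}/\sqrt{v_{A,\ell}}$ collapses to $\alpha\sqrt{v_{A,\ell}}$, so the first sum becomes $\alpha\sum_{\ell}\sqrt{v_{A,\ell}x_{D,\ell}}$ and the prefactor of the second term becomes $\alpha\sqrt{v_{A,k}}$. The two sums on the right-hand side then coincide and combine, giving
\begin{equation*}
\mu(B_A+B_D)=\alpha\,\frac{\sqrt{v_{A,k}}}{\sqrt{x_{D,k}}}\sum_{\ell=1}^n\sqrt{v_{A,\ell}x_{D,\ell}}.
\end{equation*}
Since the right-hand side must be independent of $k$ and the sum is a positive constant, this identity forces $\sqrt{x_{D,k}}=C\sqrt{v_{A,k}}$ for a $k$-independent constant $C$, i.e.\ $x_{D,k}\propto v_{A,k}\propto v_{D,k}$. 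Imposing the budget constraint $\sum_k x_{D,k}=B_D$ then pins the proportionality constant and yields $x_{D,k}=B_D\,v_{D,k}/V_D$.

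Next I would plug this defender allocation into the attacker's best response \eqref{eq:boh1}. Because $v_{A,j}x_{D,j}=v_{A,j}^{2}\,\alpha B_D/V_D$, the square roots factor as $v_{A,j}\sqrt{\alpha B_D/V_D}$, and the ratio $\sqrt{v_{A,j}x_{D,j}}/\sum_k\sqrt{v_{A,k}x_{D,k}}$ simplifies cleanly to $v_{A,j}/V_A$. Using the proportionality of $v_{D,j}/V_D$ and $v_{A,j}/V_A$, the $-x_{D,j}$ and the $B_D$ contribution of $(B_A+B_D)$ cancel, leaving precisely $x_{A,j}=B_A\,v_{A,j}/V_A$, as claimed.

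The only subtle point, and what I would expect to be the main obstacle, is that equation \eqref{eq:toty} only captures the necessary first-order conditions for the incumbent's constrained problem; to legitimately conclude it is an equilibrium one should argue the stationary point found is indeed a maximum. I would handle this by invoking Proposition~\ref{prop:sierra}: under the same scaling hypothesis the allocation $x_{i,j}=B_iv_{i,j}/V_i$ is already a Nash equilibrium, so the defender cannot improve by deviating against the anticipated best response either, making it the subgame perfect equilibrium. The rest of the argument is routine algebra.
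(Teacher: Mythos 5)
Your computations are correct, and you take a genuinely different route from the paper. The paper's proof is purely structural: under $v_{D,j}=\alpha v_{A,j}$ the payoffs satisfy $F_D=\alpha(V_A-F_A)$, so the game is strategically constant-sum; for such games the leader's commitment payoff $\max_{{\bf x}_D}F_D({\bf x}_D,BR_A({\bf x}_D))$ equals $\max_{{\bf x}_D}\min_{{\bf x}_A}F_D$, hence Stackelberg and Nash coincide and Proposition~\ref{prop:sierra} applies verbatim. You instead verify the claim directly from the first-order condition \eqref{eq:toty} and the follower's reaction \eqref{eq:boh1}; the algebra checks out (both terms of \eqref{eq:toty} collapse to $\alpha\sqrt{v_{A,k}/x_{D,k}}\sum_\ell\sqrt{v_{A,\ell}x_{D,\ell}}$, forcing $x_{D,k}\propto v_{A,k}$, and the $-x_{D,j}$ term in \eqref{eq:boh1} cancels the $B_D$ contribution as you say). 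Your route has the merit of actually exhibiting the stationary point of the leader's reduced problem, which the paper never does.

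The one soft spot is the one you flagged, and your patch does not quite close it. ``The allocation is a Nash equilibrium, hence the defender cannot improve by deviating against the anticipated best response'' is not a valid inference in a general game: Nash-ness only excludes profitable deviations against a \emph{fixed} opponent strategy, whereas the Stackelberg leader deviates against a \emph{re-optimizing} follower, and commitment can strictly benefit a leader in general. What makes the inference true here is precisely the constant-sum identity above (the follower's re-optimization can only lower the leader's payoff), i.e.\ the paper's whole argument. Alternatively, and more in the spirit of your computation, you can bypass game-theoretic structure entirely: after substituting \eqref{eq:boh1}, the leader's reduced objective is $\frac{\alpha}{B_A+B_D}\big(\sum_\ell\sqrt{v_{A,\ell}x_{D,\ell}}\big)^2$, and Cauchy--Schwarz shows this is globally maximized over the budget simplex exactly at $x_{D,\ell}=B_D v_{A,\ell}/V_A$. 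Adding either observation makes your proof complete.
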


\begin{proof}
Following the proof of Proposition~\ref{prop:uganda},
when the valuations of the defender
are proportional to the valuations of the attacker,
the objective function for player $i$ is given by
$F_i({\bf x}_i,{\bf x}_{-i},{\bf v}_{-i})=\alpha F_i({\bf x}_i,{\bf x}_{-i},{\bf v}_{-i})$.
Therefore, player $i$ has an objective function equivalent
to $F_i({\bf x}_i,{\bf x}_{-i},{\bf v}_{-i})$.
In that case, the game is equivalent to a two-player zero-sum game 
and thus there is no difference between the Stackelberg equilibrium
and the Nash equilibrium given by Proposition~\ref{prop:uganda}.
\end{proof}

Contrary to the previous proposition, in the scenario 
considered in Proposition~\ref{prop:zaire1},
the strategies of the Stackelberg equilibrium
are very different from the strategies of the Nash equilibrium
given by  Proposition~\ref{prop:zaire1}
and different than the strategies previously considered.

\begin{proposition}\label{prop:tanzania1}
Assume that $n$ is even, so there exists \mbox{$m\in\mathbf{N}\setminus\{0\}$}
such that $n=2m$.
Assume that $v_{A,\ell}=v$ $\forall 1\le\ell\le n$,
$v_{D,k}=\alpha v$ $\forall 1\le k\le m$ and $v_{D,k'}=\beta v$ $\forall m+1\le k'\le n$,
for $\alpha,\beta>0$.
Then the Stackelberg equilibrium is given by
\begin{align*}
x_{D,k}&=\frac{B_D}{4m}\Big(2\pm\frac{\sqrt{2}(\beta-\alpha)}{\sqrt{\alpha^2 + \beta^2}}\Big)\quad\forall 1\le k\le m,\\
x_{D,k'}&=\frac{B_D}{4m}\Big(2\pm\frac{\sqrt{2}(\alpha-\beta)}{\sqrt{\alpha^2 + \beta^2}}\Big)\quad\forall m+1\le k'\le n.
\end{align*}
\end{proposition}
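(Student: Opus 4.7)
The plan is to apply the first-order optimality condition (\ref{eq:toty}) specialised to the two-community setting, use a symmetry argument to reduce the problem to a single unknown, and then solve the resulting scalar equation in closed form.

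First I would argue that any critical point of the Stackelberg Lagrangian is constant within each community. Equation (\ref{eq:toty}) depends on $k$ only through $v_{A,k}$, $v_{D,k}$, and $x_{D,k}$, while the two sums over $\ell$ are global and therefore $k$-independent. Because the pair $(v_{A,k},v_{D,k})$ is constant on $\{1,\ldots,m\}$ and on $\{m+1,\ldots,n\}$, this forces $x_{D,1}=\cdots=x_{D,m}\equiv p$ and $x_{D,m+1}=\cdots=x_{D,n}\equiv q$ at any stationary point, by the same symmetry reasoning used in Proposition~\ref{prop:uganda}. Combined with the budget constraint $\sum_j x_{D,j}=B_D$, I obtain $m(p+q)=B_D$, which I parametrise as $p=B_D/(2m)+\delta$, $q=B_D/(2m)-\delta$.

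Next I would reduce (\ref{eq:toty}) to a single scalar equation. Setting $v_{A,\ell}=v$ everywhere, the two sums in (\ref{eq:toty}) evaluate to $\sum_\ell v_{D,\ell}\sqrt{x_{D,\ell}}/\sqrt{v_{A,\ell}}=m\sqrt{v}(\alpha\sqrt{p}+\beta\sqrt{q})$ and $\sum_\ell\sqrt{v_{A,\ell}x_{D,\ell}}=m\sqrt{v}(\sqrt{p}+\sqrt{q})$. Instantiating (\ref{eq:toty}) once for some $k\le m$ and once for some $k'>m$, and factoring out $mv$ from each, the right-hand sides simplify to $2\alpha+(\alpha+\beta)\sqrt{q/p}$ and $2\beta+(\alpha+\beta)\sqrt{p/q}$ respectively. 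Since both expressions must equal the common constant $2\mu(B_A+B_D)/(mv)$, they are equal to each other, and a brief manipulation yields the clean relation $(p-q)/\sqrt{pq}=2(\alpha-\beta)/(\alpha+\beta)$.

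Finally I would solve this in closed form. In the $(a,\delta)$ variables with $a=B_D/(2m)$, the relation becomes $2\delta/\sqrt{a^2-\delta^2}=c$ where $c:=2(\alpha-\beta)/(\alpha+\beta)$. Squaring gives $\delta=ca/\sqrt{c^2+4}$, and the identity $c^2+4=8(\alpha^2+\beta^2)/(\alpha+\beta)^2$ collapses this to $\delta=\sqrt{2}(\alpha-\beta)B_D/(4m\sqrt{\alpha^2+\beta^2})$. Substituting back into $p$ and $q$ delivers the displayed formulas; the $\pm$ in the statement reflects the symmetric choice of which community is labelled first (or equivalently, which sign of $\alpha-\beta$ one is in). The only genuine obstacle is bookkeeping: one has to patiently verify that the two instances of (\ref{eq:toty}) collapse to the same relation in $\sqrt{p/q}$, and then recognise $\sqrt{c^2+4}=2\sqrt{2(\alpha^2+\beta^2)}/(\alpha+\beta)$ to match the clean $\sqrt{\alpha^2+\beta^2}$ in the denominator of the stated answer.
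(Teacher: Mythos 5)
Your proposal is correct and follows essentially the same route as the paper: instantiate the first-order condition \eqref{eq:toty} on each community to get within-community constancy of $x_{D,k}$, equate the two resulting expressions to obtain a single relation in $\sqrt{p/q}$, and close with the budget constraint. Your final step is actually more explicit than the paper's (which merely asserts the solution of the scalar equation), and your sign bookkeeping correctly identifies which branch of the stated $\pm$ is the actual equilibrium.
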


\begin{proof}
From eq.~\eqref{eq:toty},
for $1\le k\le m$,

\begin{align}
&2\mu(B_A+B_D)=
\frac{\sqrt{v}}{\sqrt{x_{D,k}}}
\alpha\sum_{\ell=1}^m\sqrt{vx_{D,\ell}}+\label{eq:kenya1}\\
&\frac{\sqrt{v}}{\sqrt{x_{D,k}}}\beta\sum_{\ell=m+1}^n\sqrt{vx_{D,\ell}}\nonumber
+
\frac{\sqrt{v}}{\sqrt{x_{D,k}}}
\left(
\alpha\sum_{\ell=1}^n\sqrt{vx_{D,\ell}}\right).
\end{align}

Equivalently,
\begin{align*}
&\sqrt{x_{D,k}}=\\
&\frac{v}{2\mu(B_A+B_D)}
\left(2\alpha\sum_{\ell=1}^m\sqrt{x_{D,\ell}}+(\alpha+\beta)\sum_{\ell=m+1}^n\sqrt{x_{D,\ell}}\right).
\end{align*}
We have the previous expression for every $k\in\{1,\ldots,m\}$, therefore
$x_{D,1}=x_{D,2}=\ldots=x_{D,m}\equiv x$.

Similarly, from eq.~\eqref{eq:toty},
for $m+1\le k'\le n$,
\begin{align}
2\mu(B_A+B_D)&=
\frac{\sqrt{v}}{\sqrt{x_{D,k'}}}
\left(
\alpha\sum\limits_{\ell=1}^m\sqrt{vx_{D,\ell}}+
\beta\sum\limits_{\ell=m+1}^n\sqrt{vx_{D,\ell}}\right)\nonumber
\\
&+
\frac{\sqrt{v}}{\sqrt{x_{D,k'}}}\left(
\beta\sum_{\ell=1}^n\sqrt{vx_{D,\ell}}\right)\label{eq:kenya2}
\end{align}
Equivalently,
\begin{align*}
&\sqrt{x_{D,k}}=\\
&\frac{v}{2\mu(B_A+B_D)}
\left(
(\alpha+\beta)\sum_{\ell=1}^m\sqrt{x_{D,\ell}}+
2\beta\sum_{\ell=m+1}^n\sqrt{x_{D,\ell}}\right).
\end{align*}
We have the previous expression for every $k\in\{m+1,\ldots,n\}$, therefore
$x_{D,m+1}=x_{D,m+2}=\ldots=x_{D,n}\equiv y$.

From the difference between eq.~\eqref{eq:kenya2} and eq.~\eqref{eq:kenya1},
\begin{equation}
\left(
\frac{\alpha+\beta}{\sqrt{y}}-\frac{2\alpha}{\sqrt{x}}
\right)m\sqrt{x}
=
\left(
\frac{\alpha+\beta}{\sqrt{x}}-\frac{2\beta}{\sqrt{y}}
\right)m\sqrt{y}.
\end{equation}
The solutions of the previous equation are given by
\begin{equation}
x=\frac{B_D}{4m}\Big(2\pm\frac{\sqrt{2}(\alpha-\beta)}{\sqrt{\alpha^2 + \beta^2}}\Big).
\end{equation}
Since $x+y=B_D/m$, then
\begin{eqnarray}
y=\frac{B_D}{4m}\Big(2\mp\frac{\sqrt{2}(\alpha-\beta)}{\sqrt{\alpha^2 + \beta^2}}\Big).
\end{eqnarray}

%
%
From eq.~\eqref{eq:boh1}, we obtain $x_{A,k}$ for both $1\le k\le m$ and $x_{A,k'}$ for $m+1\le k\le n$.
\end{proof}

\section{Simulations}

In this section, we compare through numerical simulations
the Nash equilibrium and the Stackelberg equilibrium
for the allocation game described above.

Consider that the number of potential customers \mbox{$n=100,000$} and
that within these potential customers we have two communities of equal size~\mbox{$m=50,000$}.
We consider that the budget allocated to capture the market
by the attacker is $B_A=200,000$, and we consider three different
scenarios for the budget of the defender: 
i) the defender has half the budget of the attacker~$B_D=B_A/2=100,000$,
ii) the defender and the attacker have the same budget~$B_D=B_A=200,000$, 
iii) the defender has two times the budget of the attacker~$B_D=2B_A=400,000$.

Assume that the network value of each customer for the challenger (attacker)
is the same $v_{A,\ell}=v=10\quad\forall 1\le\ell\le n$.
However, for the incumbent (challenger), each community has a different
network value $v_{D,\mathcal{C}_1}=v(1-\delta)=10(1-\delta)$ for the customers of the first community
and $v_{D,\mathcal{C}_2}=v(1+\delta)=10(1+\delta)$ for the customers of the second community,
where $\delta$ is a given parameter.

The Nash equilibrium (NE) was computed through Proposition~\ref{prop:zaire1} and
the Stackelberg equilibrium (SE) was computed through Proposition~\ref{prop:tanzania1}.
The percentage increase in profits was computed as $100*(SE-NE)/NE$.
The results are given in Figure~\ref{fig:rwanda2} and Figure~\ref{fig:rwanda1}.

\begin{figure}[h!]
  \centering
    \includegraphics[width=0.5\textwidth]{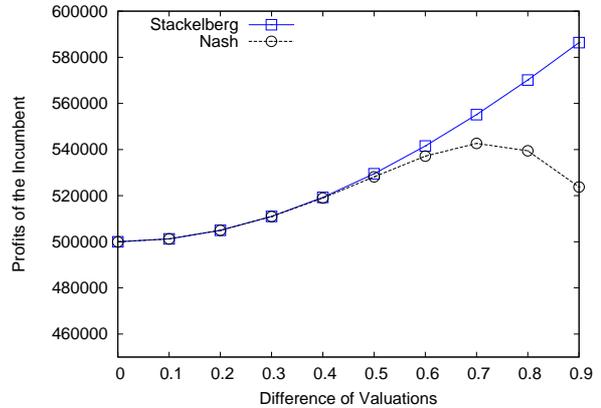}
  \caption{Profits for the incumbent (defender)
        vs the difference of valuations between communities ($\delta$) for equal budgets.}\label{fig:rwanda2}
\end{figure}

\begin{figure}[h!]
  \centering
    \includegraphics[width=0.5\textwidth]{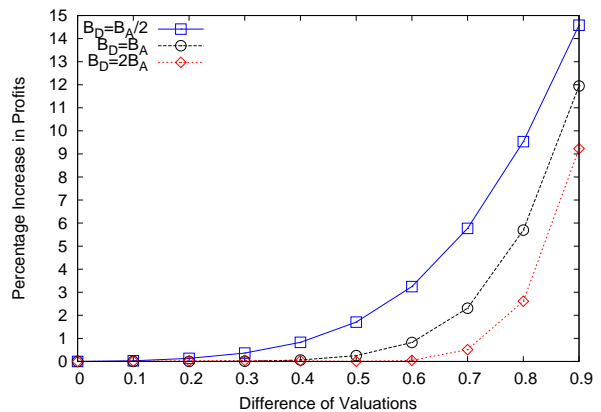}
  \caption{Percentage increase in profits for the incumbent (defender)
	vs the difference of valuations between communities ($\delta$)
	by committing to the Stackelberg leadership model versus Nash equilibrium.}\label{fig:rwanda1}
\end{figure}

In Figure~\ref{fig:rwanda2}, we observe that
for small difference of valuations
Stackelberg and Nash equilibria
give roughly the same profit.
However, when the difference of valuations
increases we have that while the profit
obtained by Stackelberg increases,
the profit of the Nash equilibrium after
a threshold decreases.

In Figure~\ref{fig:rwanda1}, we notice that for a small difference of valuations between communities,
both models give roughly the same profits, however when
the difference of valuations between communities grows,
the Stackelberg equilibrium gives much higher profits
than the Nash equilibrium.

Another interesting observations is that in
the case when the defender has smaller budget
compared with the attacker,
the difference in profits from both equilibria
is much higher compared with the scenario
when the defender has higher budget.

\begin{section}{Conclusions}
We have studied the case of two marketing campaigns
competing to maximize their profit from the
network value of the potential customers.
We have analyzed the following situations:
(a) when the decision of how many resources to
allocate to market to potential customers
is made simultaneously, and
(b) when the decision is sequential
and the incumbent foreseeing the arrival of the challenger
can commit to a strategy before its arrival.
\end{section}

\section*{Acknowledgments}
The work of A.~Silva was partially done in the context of the ADR ``Network Science" of the Joint Alcatel-Lucent Inria Lab.
The work of A.~Silva was partially carried out at LINCS (\url{www.lincs.fr}).

\bibliographystyle{hieeetr}
\bibliography{mybibfile}

\twocolumn[
  \begin{@twocolumnfalse}
\section*{Appendix}
The term $x_A^*$ was computed through Matlab Symbolic Toolbox from the equation:
\begin{align*}
x=-\frac{B_D}{m}\frac{\alpha x}{(\alpha-\beta)x+\beta B_A/m}
+\frac{B_A+B_D}{m}\frac{\sqrt{\alpha x}}{\sqrt{\alpha x}+\sqrt{\beta (B_A/m-x)}},
\end{align*}
and it is given by
\begin{align*}
&x_A^*=(((3^{(1/2)} ((B_D^2 \alpha^8 \beta^2 (4 B_A^4 \alpha^4 + 8 B_A^4 \alpha^2 \beta^2 + 4 B_A^4 \beta^4 + 4 B_A^3 B_D \alpha^4
- 8 B_A^3 B_D \alpha^3 \beta + 72 B_A^3 B_D \alpha^2 \beta^2 - 8 B_A^3 B_D \alpha \beta^3 \\
&+ 4 B_A^3 B_D \beta^4 - B_A^2 B_D^2 \alpha^4
- 24 B_A^2 B_D^2 \alpha^3 \beta + 146 B_A^2 B_D^2 \alpha^2 \beta^2 - 24 B_A^2 B_D^2 \alpha \beta^3 - B_A^2 B_D^2 \beta^4
- 16 B_A B_D^3 \alpha^3 \beta + 96 B_A B_D^3 \\& \alpha^2 \beta^2 - 16 B_A B_D^3 \alpha \beta^3 + 4 B_D^4 \alpha^3 \beta +
8 B_D^4 \alpha^2 \beta^2 + 4 B_D^4 \alpha \beta^3))/(m^6 (\alpha^2 - \beta^2)^4))^{(1/2)})/18 - (\alpha^3 (2 B_A \beta^2
- B_A \alpha^2 + 3 B_A \alpha \beta\\& 
+ 4 B_D \alpha \beta)^3)/(27 m^3 (\alpha + \beta)^3 (\alpha - \beta)^3) + (\alpha^3 \beta (2 B_A \beta^2
- B_A \alpha^2 + 3 B_A \alpha \beta + 4 B_D \alpha \beta) 
(3 B_A^2 \alpha \beta - 2 B_A^2 \alpha^2 + B_A^2 \beta^2\\& - 2 B_A B_D \alpha^2 +
6 B_A B_D \alpha \beta + B_D^2 \alpha^2 + B_D^2 \alpha \beta))/(6 m^3 (\alpha + \beta)^2 (\alpha - \beta)^3)
\\&  + (B_A \alpha^4 \beta^2 (B_A + B_D)^2)/(2 m^3 (\alpha + \beta) (\alpha - \beta)^2))^{(1/3)}
- (\alpha (2 B_A \beta^2 - B_A \alpha^2 +
3 B_A \alpha \beta + 4 B_D \alpha \beta))/(3 m (\alpha^2 - \beta^2))
\\&+(\alpha^2 (B_A^2 \alpha^4 + 2 B_A^2 \alpha^2 \beta^2 + B_A^2 \beta^4
- 2 B_A B_D \alpha^3 \beta + 12 B_A B_D \alpha^2 \beta^2 - 2 B_A B_D \alpha \beta^3 - 3 B_D^2 \alpha^3 \beta + 10 B_D^2 \alpha^2 \beta^2\\
&- 3 B_D^2 \alpha \beta^3))/(9 m^2 (\alpha^2 - \beta^2)^2 ((3^{(1/2)} ((B_D^2 \alpha^8 \beta^2 (4 B_A^4 \alpha^4
+ 4 B_A^4 \beta^4 + 4 B_A^3 B_D \alpha^4 + 4 B_A^3 B_D \beta^4 + 4 B_D^4 \alpha \beta^3 + 4 B_D^4 \alpha^3 \beta\\&
- B_A^2 B_D^2 \alpha^4
- B_A^2 B_D^2 \beta^4 + 8 B_A^4 \alpha^2 \beta^2 + 8 B_D^4 \alpha^2 \beta^2
+ 146 B_A^2 B_D^2 \alpha^2 \beta^2 - 16 B_A B_D^3 \alpha \beta^3
- 16 B_A B_D^3 \alpha^3 \beta - 8 B_A^3 B_D \alpha \beta^3
\\& - 8 B_A^3 B_D \alpha^3 \beta + 96 B_A B_D^3 \alpha^2 \beta^2 - 24 B_A^2 B_D^2 \alpha \beta^3
- 24 B_A^2 B_D^2 \alpha^3 \beta + 72 B_A^3 B_D \alpha^2 \beta^2))/(m^6 (\alpha^2 - \beta^2)^4))^{(1/2)})/18 
\\&- (\alpha^3 (2 B_A \beta^2
- B_A \alpha^2 + 3 B_A \alpha \beta + 4 B_D \alpha \beta)^3)/(27 m^3 (\alpha + \beta)^3 (\alpha - \beta)^3) 
\\& + (\alpha^3 \beta (2 B_A \beta^2
- B_A \alpha^2 + 3 B_A \alpha \beta + 4 B_D \alpha \beta) (B_A^2 \beta^2 - 2 B_A^2 \alpha^2 + B_D^2 \alpha^2 - 2 B_A B_D \alpha^2 +\\
& 3 B_A^2 \alpha \beta + B_D^2 \alpha \beta + 6 B_A B_D \alpha \beta))/(6 m^3 (\alpha + \beta)^2 (\alpha - \beta)^3) +
(B_A \alpha^4 \beta^2 (B_A + B_D)^2)/(2 m^3 (\alpha + \beta) (\alpha - \beta)^2))^{(1/3)}))/\alpha.\\
\end{align*}
\end{@twocolumnfalse}
]

\end{document}